\newcommand{\be}{\begin{equation}}
\newcommand{\ee}{\end{equation}}
\newtheorem{theorem}{Theorem}[section]
\newcommand{\fivevect}[5]
{\left(
   #1, #2, #3, #4, #5
 \end{array}\right)}
\newcommand{\vv}[1]
{
{\bf #1}
}
\newcommand{\eof}[1]
{{{\bf E} \left[ #1 \right]}}
\newcommand{\machrix}[4]
{
\left[
\begin{array}{cc}
#1 & #2 \\
#3 & #4
\end{array}
\right]
}
\newcommand{\Rr}
{{\mathbb R}}
\newenvironment{proof}[1][]
{{\bf Proof#1:} 
\hspace{0.3cm}}
{\hfill $\Box$
\vskip 1cm}
\newcounter{examplenum}[section]
\newcommand{\asconv}{\stackrel{a.s.}{\rightarrow}}
\newcommand{\diag}[1]{\mbox{diag}(#1)}
\newcommand{\ones}{\vv{e}}
\newcommand{\var}{\mbox{\bf Var}}
\author{A.E. Brockwell\thanks{anthony@brockwellanalytics.com}}
\title{Fractional Growth Portfolio Investment}
\begin{document}
\maketitle

\begin{abstract}
  We review some fundamental concepts of investment from a mathematical perspective,
  concentrating specifically on fractional-Kelly portfolios, which allocate
  a fraction of wealth to a growth-optimal portfolio while the
  remainder collects (or pays) interest at a risk-free rate.
  We elucidate a coherent continuous-parameter time-series framework for analysis
  of these portfolios, explaining relationships between Sharpe ratios, growth rates,
  and leverage.   We see how Kelly's criterion prescribes the same leverage as Markowitz mean-variance
  optimization.  Furthermore, for fractional Kelly portfolios,
  we state a simple distributional relationship between portfolio Sharpe ratio,
  the fractional coefficient, and portfolio log-returns.
  These results provide critical insight into realistic expectations of growth for different
  classes of investors, from individuals to quantitative trading operations.
  
  
  We then illustrate application of the results by analyzing performance of various
  bond and equity mixes for an investor.  We also demonstrate how the relationships can be exploited by a simple method-of-moments calculation to estimate portfolio Sharpe ratios and levels of risk deployment,
  given a fund's reported returns.


\end{abstract}

{\bf Key words:} stochastic differential equations,
geometric Brownian motion, Kelly portfolio, leverage, Sharpe ratio, fractional Kelly,  growth-optimal portfolio

\newpage

\tableofcontents

\section{Introduction}

This document describes quantitative
relationships which are important in investment,
including the concept of leverage.  In doing so, it provides
a range of insights that could be useful to home
investors, quantitative traders, or anyone who is
managing liquid investments.




We are interested in managing  a block of capital
which can be invested into
one or more risky \emph{instruments}, or held in cash/debt
at a ``risk-free'' interest rate.  An instrument is simply
a particular type of asset that can be bought and sold,
and exhibits some growth (or decay) in value over time.   
It is typical to aim for good 
long-term growth rate in total capital by investing
in a mix of different instruments.   Below we study the
question of determining the ``optimal'' mix.
By nature of the construction of the portfolio, we
ensure theoretically, that if gains/losses on capital are re-invested
continuously in time, then the pool of capital will retain non-negative
value.   In other words, we cannot lose more than our initial block
of capital. \footnote{However, it is worth noting that this is a theoretical result,
since reinvestment of profits/losses continuously in time is a practical
impossibility in most cases.}

Leverage is a critical tool in optimizing capital growth.
Roughly speaking, leverage is a mechanism
by which we can amplify the returns of an investment.
While it might seem natural that we should amplify the returns of a ``good'' investment
as much as possible, we will see that this is not a good idea.
To give a concrete example, imagine that on two successive days we observe
a 10\% loss followed by a 10\% gain.  This leaves us with
a cumulative 1\% loss ($0.9 \times 1.1 = 0.99$).  If we were to amplify the
two returns by a factor of two, however, a 20\% loss followed by a 20\% gain leaves us
with a 4\% loss ($0.8 \times 1.2 = 0.96$).  Thus, somewhat counter-intuitively, doubling the
size of our returns causes our cumulative loss to be
be \emph{worse} than double the original loss.
This asymmetry suggests to us a potential connection between
volatility, expected growth, and optimal leverage.

In the remainder of this paper, we consider this problem
from a continuous-parameter time series perspective, studying the impact of leverage analytically
using stochastic calculus.  It is beyond the scope of this document to
provide an introduction to stochastic calculus, but for a useful starting point, see,
e.g. \cite{Oksendal}, or \cite{Karatzas} for a more comprehensive treatment.
To the author's knowledge, the results stated in theorem form in this paper
do not appear in existing financial literature, although components of the
framework itself, along with associated topics of discussion, can be found in
the literature.  In particular,
\cite{breiman_gambling, hakansson_capitalgrowth, thorpe_kelly} give
much detailed discussion of Kelly portfolios and their advantages and disadvantages.
Useful related information can also be found in~\cite{kelly_collection}.
As in \cite{sid_browne1},
we will adopt a continuous-time framework
based on the use of stochastic differential equations,
to analyze a portfolio of instruments whose prices
follow a multivariate geometric Brownian motion.
\cite{maclean_growthwithsecurity} (and references therein)
point out the importance of avoiding ``bad'' outcomes
in such portfolios, pointing out that fractional Kelly schemes (among
other solutions) are useful in this context.
In this paper, we also address this problem, obtaining
a specific characterization of the trade-off between growth and safety.

\section{Problem Formulation}

Suppose that, at time $t=0$, an investor has a
block of capital $A_0$ to invest, and assume
that it can be allocated to  cash, which earns interest
continuously at the \emph{risk-free rate} $r$ per unit time,
or to one or more different risky instruments
which generate returns over time.
Over time, we will denote the total capital owned by
the investor as
\be
\{A_t, ~ t \in \Rr \}.
\ee
This quantity should be thought of as the total value of
all assets, including investments in particular instruments,
and cash/debt.
We will assume the existence of a set of $m$ different
investment instruments (not including cash), which
have price time series
\be
\{P_{t,j},~t \ge 0\},~j=1,2,\ldots,m.
\ee
In what follows, we will often refer to the vector
of prices by
\be
\vv{P}_t = (P_{t,1},\ldots,P_{t,m})^T.
\ee

The collection of investments in the $m$ instruments
as well as cash will be referred to as
the \emph{portfolio}.  As mentioned above, the value of
the portfolio at time $t$ is denoted by $A_t.$
Any cash component of the portfolio receives
continuously-compounded interest at the
risk-free rate $r$.  It will be useful to define a vector
version of the risk-free rate
\be
\vv{r} = (r,r,\ldots,r)^T \in \Rr^m.
\ee

We define the \emph{leverage} vector to be
\be
\vv{k} = (k_1,\ldots,k_m)^T.
\ee
and we define \emph{total-leverage}
\be
\kappa = \sum_{j=1}^m k_j.
\ee
We will say that our portfolio is \emph{non-leveraged}
if $\kappa \le 1.$ 

At any given time $t$, the investor
invests fractions $k_j$ of his/her capital
$A_t$ into the instruments with price $P_{t,j},$
holding the remainder in cash if that remainder is positive,
or maintaining the required debt otherwise.   Cash is assumed to earn
the interest-free rate $r$, while debt pays interest at the same rate.
By construction, the amount of cash held at time $t$ is clearly
\be
A_t (1 - \kappa).
\ee

The leverage vector determines the
manner in which returns on the $m$ investment instruments are related
to returns in the total capital.  To state this
property formally, let us define
the infinitesimal return of an investment instrument by
\be
dP_{t,j} / P_{t,j}.
\ee

Then define the infinitesimal return of our total capital by
\be
dA_t / A_t.
\ee

The infinitesimal returns are related by
\be
\label{eq:levdef}
dA_t / A_t = (1-\kappa) r dt + \sum_{j=1}^m k_j dP_{t,j} / P_{t,j}
\ee
Equation~(\ref{eq:levdef}) can be viewed as a mathematical
definition of leverage.  The first term on the right represents the return
from risk-free rate interest accrual/payment on the cash/debt portion of the portfolio, while the summation
represents return contributed by the leveraged investments.

The components of our leverage vector do not necessarily
have to be less than one, or add up to one, or even be non-negative.

For example, suppose that we have $m=2$ possible investments: an S\&P500 mutual fund
and a long-term bond fund.  (We study a case like this in more detail later in
the paper.)  A leverage vector of $(0.2,0.2)$ would indicate that
at any given point in time, we keep one fifth of our total
capital in the S\&P500 fund, one fifth in the bond fund, and the
remaining three fifths held in cash, accruing interest payment at the risk-free rate.
Alternately, a leverage vector of $(0.2,1.8)$
would indicate that we would
invest one fifth of initial capital in the S\&P500 fund, borrow an amount equal to
the initial capital, and invest the four fifths remaining initial capital, along with
the borrowed sum, in the bond fund.

\subsection{Objective}

There are many possible investment objectives.   For example, one could attempt to
minimize the probability of ultimately losing all their capital, attempt to maximize the expected
return relative to some benchmark, maximize some utility function of wealth, etc.
In this paper we aim for good long-term growth profiles for
our capital $\{A_t\}$.   To state this more precisely, we will be particularly
interested in the \emph{expected log-return per unit time}
\be
L = \eof{\log(A_{t+\delta} / A_t)} / \delta
\ee
and \emph{log-return variance per unit time}
\be
V = \var(\log(A_{t+\delta} / A_t)) / \delta.
\ee
As we will see below, this approach leads us to a convenient
stochastic calculus-based derivation of Kelly's formula, and yields
additional useful insights.

Direct maximization of $L$ (with no regard for $V$)
yields the so-called \emph{growth-optimal portfolio}, also referred to as the \emph{Kelly portfolio}.
In the long-run, the growth-optimal portfolio almost-surely leads to
more wealth than any competing portfolio.  \footnote{More formally, if $\{A_t\}$ represents the capital associated with the growth-optimal portfolio,
and $\{A'_t\}$ represents the capital from another portfolio in the same instruments, chosen in a different manner,
then under a modest set of regularity conditions, we can show that $(A'_t/A_t) \asconv 0$ as $t \rightarrow \infty.$}
The price to pay for this growth dominance
however, is relatively large up and down-swings in $\{A_t\}$.
\emph{Fractional Kelly schemes},
which simply apply a fractional multiplier to the growth-optimal portfolio
leverage vector, provide one way to address this problem.

\subsection{Mathematical Analysis}

We are now in a position to analyze the impact of leverage choice.
For the sake of exposition, we address the univariate case before developing
the multivariate case.

\subsubsection{Univariate Case}

In the univariate ($m=1$) case, we can write
down a formal stochastic differential equation to describe the
price $\{P_t\}$ of an instrument over time $t \in \Rr, t \ge 0,$ as
\begin{eqnarray}
\label{eq:sde1}
dP_t & = & \mu P_t dt + \sigma P_t dW_t, \\
\label{eq:sde2}
P_0 & = & 1,
\end{eqnarray}
where $\{W_t\}$ is a standard Brownian motion.

The solution to~(\ref{eq:sde1},\ref{eq:sde2}) is a geometric Brownian motion,
for which
\be
d \log(P_t) = (\mu - \sigma^2/2) dt + \sigma dW_t.
\ee
(This follows directly from an application of It\^o's formula to~(\ref{eq:sde1},\ref{eq:sde2}).)
The process has several important properties, which we will simply state here
without proof.
\begin{enumerate}
\item It provides a realistic description of many
  real-life instruments that can be bought as investments.
  The parameters $\mu$ and $\sigma$ vary, however, over
  different investments and arguably also over time for a particular
  investment.
\item $P_t > 0.$
\item
  The log-return of the price satisfies
  \be
  \label{eq:logprice}
  \log(P_{t+\delta}/P_t) \sim \mbox{N}(\mu \delta - \sigma^2 \delta / 2, \sigma^2 \delta). 
  \ee
\end{enumerate}

Now let us consider the behavior of total capital $\{A_t,~t \ge 0\}$ over time
when we invest a fraction $k$ (leverage) in the instrument whose price $P_t$ is
given by~(\ref{eq:sde1},\ref{eq:sde2}).
It follows directly from~(\ref{eq:levdef}) that
\begin{eqnarray}
  dA_t & = & (1-k) r A_t dt + k A_t dP_t / P_t \\
       & = &  [(1-k) r + k \mu] A_t dt + k A_t \sigma dW_t.
\end{eqnarray}
In other words, like the underlying instrument price, our amount of capital $A_t$ also
follows a geometric Brownian motion process, but with different parameters.   Consequently,
\be
\label{eq:dat1}
d \log(A_t) = [(1-k) r + k \mu - k^2 \sigma^2/2] dt + k \sigma dW_t.
\ee
It is very important to note\footnote{It is a common mistake to presume, for example, that a triple-leverage investment yields triple the log-returns over time.  This is not the case.}
that
\be
d \log(A_t) \ne k \cdot d\log(P_t).
\ee

It follows directly from~(\ref{eq:dat1}) that
\begin{enumerate}
\item $A_t > 0.$   This is our guarantee that we do not lose more than our
  initial capital.   However, note that it relies on the unrealistic assumption
  that we are able to carry out continuous reinvestment of profits/losses.
\item
  The log of total amount of capital (including reinvested profits/losses) satisfies
  \be
  \label{eq:sde3}
  \log(A_{t+\delta} / A_t) \sim \mbox{N}(r \delta + k (\mu-r) \delta - k^2 \sigma^2 \delta / 2, k^2 \sigma^2 \delta).
  \ee
\end{enumerate}

At any time $t$, equation~(\ref{eq:sde3}) provides a predictive distribution for our amount of capital
at a time point $(t+\delta)$ in the future.   Specifically, it is log-normal.

Following on from~(\ref{eq:sde3}), we see that the \emph{expected log-return per unit time} of
our capital is
\be
\label{eq:elrt} 
L(k) = \eof{\log(A_{t+\delta} / A_t)} / \delta = r + k (\mu-r) - k^2 \sigma^2 / 2.
\ee
Differentiating this with respect to $k$ and setting to zero gives us
a formula for the value of leverage $k$ that maximizes
expected log-return per unit time.   This is simply
\be
\label{eq:kelly}
k^* = (\mu-r) / \sigma^2,
\ee
and the corresponding expected log-return per unit time is
\be
\label{eq:sharpelrt}
L(k^*) = r + \frac{1}{2} (\mu-r)^2/\sigma^2.
\ee
The expression~(\ref{eq:kelly}) is well-known, and is typically referred to
as \emph{Kelly's formula}, for Kelly's analysis of a closely-related
problem (see ~\cite{kelly}).

\subsubsection{Multivariate Case}
\label{sec:mv_kelly}

It is straightforward to extend both~(\ref{eq:elrt}) and~(\ref{eq:kelly}) to
the multivariate case when we have multiple investments ($m>1$), and we allocate
proportions $\vv{k} = (k_1,k_2,\ldots,k_m)^T$ of capital to the respective
investments.
In this case, we define $\vv{\mu} = (\mu_1,\ldots,\mu_m)^T,$
$\vv{\sigma} = (\sigma_1,\ldots,\sigma_m)^T,$ as well as a correlation
matrix $R \in \Rr^{m \times m}.$  The multivariate analog of~(\ref{eq:sde1})
becomes
\be
\label{eq:mvsde0}
d\vv{P}_t = \diag{\vv{\mu}} \vv{P}_t dt + \diag{\vv{\sigma}} \diag{\vv{P}_t} d\vv{U}_t,
\ee
where $\{\vv{U}_t\}$ is a multivariate Brownian motion with correlation
matrix $R$, so that
\be
\eof{\vv{U}_t} = 0, \quad \var(\vv{U_t}) = Rt,
\ee
and $\diag{\cdot}$ represents a square matrix with diagonal
elements given by the vector argument, and zeros in all off-diagonal positions.
It will be convenient to define
\be
\Sigma = [\diag{\vv{\sigma}}] R [\diag{\vv{\sigma}}].
\ee
Also, recall the definition $\kappa = \sum k_j = \vv{k} \cdot \vv{e}_m,$
where $\vv{e}_m = (1,1,\ldots,1)^T \in \Rr^m.$

Applying~(\ref{eq:levdef}), it is straightforward to show that there is another
standard Brownian motion $\{W_t\}$ such that the total capital $\{A_t\}$
satisfies
\begin{eqnarray}
  dA_t & = & ((1-\vv{k}\cdot \vv{e}_m)r + \vv{k} \cdot \vv{\mu}) A_t dt + (\vv{k}^T \Sigma \vv{k})^{1/2} A_t dW_t \\
       & = & (r - \vv{k} \cdot \vv{r} + \vv{k} \cdot  \vv{\mu}) A_t dt + (\vv{k}^T \Sigma \vv{k})^{1/2} A_t dW_t,
\end{eqnarray}
and it follows directly from application of It\^o's formula that
\be
\label{eq:mvsde}
d\log(A_t) = (r + \vv{k} \cdot (\vv{\mu} - \vv{r}) - \vv{k}^T \Sigma \vv{k} /2) dt
   + (\vv{k}^T \Sigma \vv{k})^{1/2} dW_t,
\ee
so the expected log-return per unit time is given by
\be
\label{eq:mvelrt}
L(\vv{k}) = \eof{\log(A_{t+\delta} / A_t)} / \delta = r + \vv{k} \cdot (\vv{\mu}-\vv{r}) - \vv{k}^T \Sigma \vv{k} / 2,
\ee
Consequently Kelly's formula becomes
\be
\label{eq:mvkelly}
\vv{k}^* = \Sigma^{-1} (\vv{\mu} - \vv{r}),
\ee
with maximum expected log-return per unit time
\be
\label{eq:mvsharpelrt}
L(\vv{k}^*) = r + \frac{1}{2} (\vv{\mu}-\vv{r})^T \Sigma^{-1} (\vv{\mu}-\vv{r}).
\ee
We will also be interested in the variance of the log-return per unit time.
From~(\ref{eq:mvsde}), this is easily seen to be
\be
\label{eq:mvvlrt}
V(\vv{k}^*) = \var(\log(A_{t+\delta} / A_t)) / \delta
 = \vv{k}^T \Sigma \vv{k}.
 \ee

 Readers may note that~(\ref{eq:mvkelly}) is recognizable as the solution of a
 standard Markowitz mean-variance portfolio
 weight selection problem \cite[see][]{markowitz_meanvariance}.

\subsection{Sharpe Ratios and Fractional Kelly Investment} 

In the finance industry, in the univariate case,
it is common practice to refer to
the quantity $(\mu-r)/\sigma$, as the \emph{Sharpe ratio} of the investment.
\footnote{It is important to differentiate between
  this quantity, and estimators thereof.
  Unfortunately, people often use the same term to refer to both
  the quantity and its estimator(s).}
More generally, we will define the \emph{Sharpe ratio} of
a single investment or a set of investments as
\be
\label{eq:sharpedefn}
S = [(\vv{\mu}-\vv{r})^T \Sigma^{-1} (\vv{\mu}-\vv{r})]^{1/2},
\ee
where $\vv{\mu}$ and $\Sigma$ are the parameters defining the
growth and volatility of the component investments, and $\vv{r}$ is the
vector whose elements are all equal to the risk-free rate.
(Note that the one-dimensional multivariate expression is consistent with the univariate definition.)

Sharpe ratios are clearly important
because the maximum growth rates in~(\ref{eq:sharpelrt}, \ref{eq:mvsharpelrt})
increase quadratically with respect to the Sharpe ratio.

We have seen that Kelly's formula prescribes the leverage that
maximizes expected log-return per unit time $L$ of an investment.
However, to limit up/down-swings in capital,
it is desirable to operate with
a lower total level of risk.  To achieve this, we can apply a multiplier to
~(\ref{eq:mvkelly}), 
\be
\label{eq:frack}
\vv{k}_\alpha = \alpha \Sigma^{-1} (\vv{\mu} - \vv{r}), \quad \alpha \in [0,1].
\ee
This quantity is often referred to as \emph{fractional Kelly} leverage.
In this light, fractional Kelly investment can be viewed
as a particular means of leverage selection, with $\alpha$ specifying the level of risk relative to
that required for the growth-optimal portfolio.

The following result makes a formal connection
between leverage, log-returns and Sharpe ratios.

\begin{theorem}[Sharpe-Leverage Performance Profile]
  \label{th:slpp}
  Suppose that we apply fractional Kelly leverage~(\ref{eq:frack}) with
  multiplier $\alpha \in [0,1]$ to the
  portfolio with prices governed by~(\ref{eq:mvsde0}),
  earning/paying the risk-free interest rate $r$
  on the cash/debt component.
  Then the resulting capital process $\{A_t\}$ is a geometric Brownian motion
  satisfying
  \be
  \label{eq:agbm}
d\log(A_t) = [r + (\alpha - \alpha^2/2) S^2] dt 
   + \alpha S dW_t,
   \ee
where $S = [(\vv{\mu}-\vv{r})^T \Sigma^{-1} (\vv{\mu}-\vv{r})]^{1/2}$ denotes the Sharpe ratio
of the portfolio.   
Consequently, $\{A_t\}$ has expected log-return per unit time
\be
\label{eq:fracelrt}
L(\vv{k}_\alpha) =   \eof{\log(A_{t+\delta} / A_t)} / \delta =
r + (\alpha - \alpha^2/2) S^2
\ee
and log-return variance per unit time
\be
\label{eq:fracvlrt}
V(\vv{k}_\alpha) = \var(\log(A_{t+\delta}/A_t))/\delta = \alpha^2 S^2.
\ee
\end{theorem}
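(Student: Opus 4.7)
The plan is to simply substitute the fractional Kelly leverage $\vv{k}_\alpha = \alpha\,\Sigma^{-1}(\vv{\mu}-\vv{r})$ into the general multivariate expression for $d\log(A_t)$ that was already derived in equation~(\ref{eq:mvsde}), and then read off the drift and diffusion coefficients. All the heavy lifting (applying It\^o's formula to pass from $dA_t/A_t$ to $d\log(A_t)$, and verifying existence of the driving Brownian motion $\{W_t\}$) has been done already in Section~\ref{sec:mv_kelly}, so this theorem is really a corollary of equation~(\ref{eq:mvsde}) once the two quadratic forms are simplified.

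Concretely, I would compute the two inner products that appear in~(\ref{eq:mvsde}). The linear term becomes
\[
\vv{k}_\alpha \cdot (\vv{\mu}-\vv{r}) = \alpha\,(\vv{\mu}-\vv{r})^T \Sigma^{-1}(\vv{\mu}-\vv{r}) = \alpha S^2,
\]
by the very definition of $S$ in~(\ref{eq:sharpedefn}). The quadratic term collapses similarly, because the $\Sigma$ in the middle cancels one of the $\Sigma^{-1}$ factors:
\[
\vv{k}_\alpha^T \Sigma \vv{k}_\alpha = \alpha^2 (\vv{\mu}-\vv{r})^T \Sigma^{-1}\Sigma\,\Sigma^{-1}(\vv{\mu}-\vv{r}) = \alpha^2 S^2.
\]
Substituting these two identities into~(\ref{eq:mvsde}) gives exactly the drift $r+(\alpha-\alpha^2/2)S^2$ and the diffusion coefficient $\sqrt{\alpha^2 S^2} = \alpha S$ (using $\alpha \in [0,1]$ so no sign ambiguity arises), yielding~(\ref{eq:agbm}).

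From~(\ref{eq:agbm}), the formulas~(\ref{eq:fracelrt}) and~(\ref{eq:fracvlrt}) are immediate: integrating $d\log(A_t)$ over an interval of length $\delta$ shows that $\log(A_{t+\delta}/A_t)$ is normally distributed with mean $[r+(\alpha-\alpha^2/2)S^2]\delta$ and variance $\alpha^2 S^2\,\delta$, and dividing by $\delta$ gives exactly the stated expressions for $L(\vv{k}_\alpha)$ and $V(\vv{k}_\alpha)$.

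There is no real obstacle here, since the derivation is essentially algebraic. The only point worth flagging is to make sure the absorption of $\Sigma^{-1}\Sigma\Sigma^{-1}$ into $\Sigma^{-1}$ is stated cleanly, and to note that $(\alpha^2 S^2)^{1/2}=\alpha S$ relies on $\alpha \geq 0$, which is given. Once that bookkeeping is done, the geometric Brownian motion structure of $\{A_t\}$ is inherited directly from the multivariate analysis, and the theorem is proved.
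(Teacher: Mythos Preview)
Your proposal is correct and follows exactly the paper's own approach: substitute the fractional Kelly leverage~(\ref{eq:frack}) into the previously derived expression~(\ref{eq:mvsde}), simplify the linear and quadratic forms using the definition~(\ref{eq:sharpedefn}) of $S$, and read off the consequences. The paper's proof states this in one line without writing out the two inner-product computations, but your explicit algebra is a faithful (and slightly more detailed) expansion of the same argument.
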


\begin{proof}
  The derivation of~(\ref{eq:mvsde}) above establishes that $\{A_t\}$ is a geometric Brownian
  motion.  Substituting~(\ref{eq:frack}) into~(\ref{eq:mvsde}) and making use of
  the definition~(\ref{eq:sharpedefn}) yields~(\ref{eq:agbm}).
  The remaining results then follow directly.
\end{proof}

When the conditions of Theorem~\ref{th:slpp} are met and $\alpha = 1,$ we
obtain the growth-optimal portfolio.   More generally,
equations~(\ref{eq:fracelrt}, \ref{eq:fracvlrt}) show
the trade-off we obtain with different values of $\alpha.$   Values closer to one yield higher expected log-return
per unit time on capital, but values closer to zero give better (reduced) variance
of the log-return around the mean.

\section{Examples and Applications}

We now consider some practical implications of the theory outlined
above, beginning with analysis that would be of interest to a
typical individual with some savings to invest.

\subsection{Equity and Bonds Mix}

Let us consider a portfolio consisting of only two instruments:
$\{P_{t,1}\}$ will represent the price of Vanguard 500 Mutual Fund Index shares
(VFIAX), and $\{P_{t,2}\}$ will
represent the price of Vanguard Long-Term Treasury Fund shares (VUSUX).

\begin{figure}
  \includegraphics[width=7in]{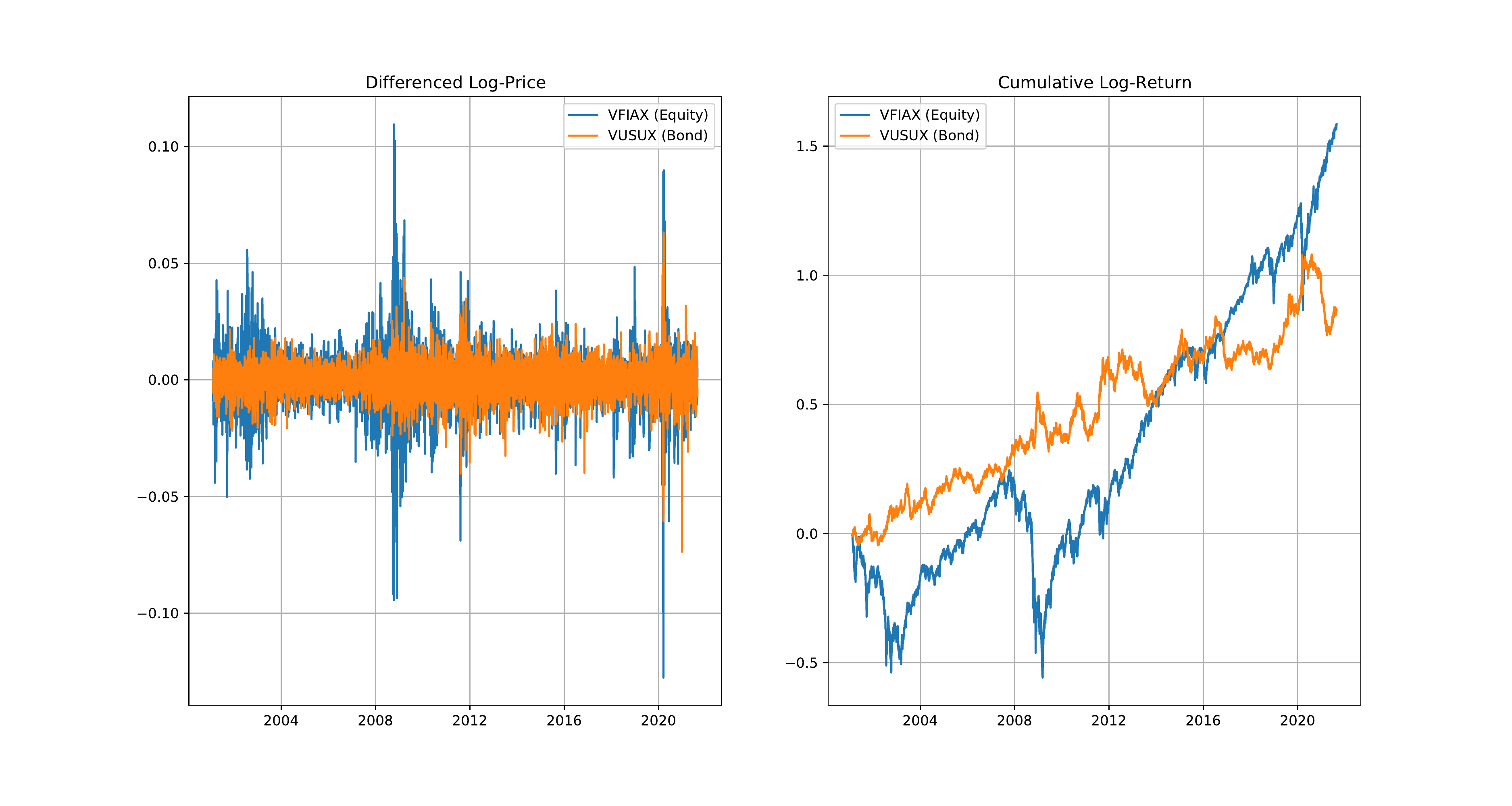}
  \centerline{\parbox[t]{5.5in}{
  \caption{Equity (VFIAX) and Bond (VUSUX) performance from Feb. 12th, 2001 to Aug. 24th, 2021.
    Data source: Yahoo Finance.}
  \label{fig:common}
   }}
\end{figure}

Figure~\ref{fig:common} shows differences of logarithms of adjusted daily closing prices for these
two instruments from 2001 to 2021.   This time period covers the dot-com bubble of 2002 and the sub-prime crisis of 2008,
as well as more recent market turbulence at the onset of the coronavirus pandemic in early 2020.

In what follows, we will assume that the risk-free rate $r$ is equal to zero.
This is a fairly reasonable approximation for the post-sub-prime period from 2008 to 2021,
although results here can be easily adapted if $r$ is assumed to be non-zero.

\subsubsection{Parameter Estimation}

From the raw data shown in the plots in Figure~\ref{fig:common},
we can easily compute estimators of $\vv{\mu} = (\mu_1,\mu_2)$,
$\vv{\sigma} = (\sigma_1,\sigma_2)$ and $R$.
Let the unit of time be one year, assume there are
260 trading days each year, and define the daily log-returns
(shown in the left part of Figure~\ref{fig:common})
as
\be
D_{t,j} = \log(P_{(t+1)/260,j}) - \log(P_{t/260,j}), \quad t=1,2,\ldots,n-1,
\ee
where $n$ is the total number of trading days of data.
We know from~(\ref{eq:logprice}) that
\be
\label{eq:est1}
D_{t,j} \sim \mbox{N}((\mu_j - \sigma^2_j/2)/260, \sigma^2/260).
\ee
Furthermore,
\be
\mbox{Corr}(D_{t,1},D_{t,2}) = R_{12}.
\label{eq:est2}
\ee
We can therefore take sample mean and variance from
our two daily price series, compute sample correlation,
and use them to construct method-of-moments estimators
\begin{eqnarray}
  \hat{\sigma}^2_j & = & {260 \over n-2} \sum_{t=1}^{n-1} (D_{t,j} - \bar{D}_j)^2,
  \quad \bar{D}_j = {1 \over n-1} \sum_{t=1}^{n-1} D_{t,j}, \\
  \hat{\sigma}_j & = & \sqrt{\hat{\sigma}^2_j}, \\
  \hat{\mu}_j & = & 260 \bar{D}_j + \hat{\sigma}^2_j / 2, \\
  \hat{R}_{jk} & = & {260 \over \hat{\sigma}_j \hat{\sigma}_k (n-2)} \sum_{t=1}^{n-1} (D_{t,j} - \bar{D}_j)(D_{t,k} - \bar{D}_k)
\end{eqnarray}
With the raw data described above, we obtain
\be
\label{eq:estimates1}
\hat{\vv{\mu}} = 
(0.099, 0.051)^T, \quad
\hat{\vv{\sigma}} = (0.199, 0.123)^T, \quad
\hat{R}_{12} = -0.377.
\ee

From a practical perspective, we also want to consider the different
impact of taxation on these two instruments.   In the United States, tax on
a mutual fund that tracks the S\&P500 is typically close to
the long-term capital gain rate of 20\%, while tax on bond funds,
although more complex to compute, is typically closer to standard
income tax, which is often around 40\%.   To approximate the impact of taxation,
we will therefore adjust the estimates~(\ref{eq:estimates1}) above by multiplying components
of $\mu$ by one minus the corresponding tax rate, giving us final
estimates
\be
\label{eq:estimates2}
\hat{\vv{\mu}} = (0.079, 0.031), \quad
\hat{\Sigma} = \machrix{\hphantom{-}0.0396}{-0.0093}{-0.0093}{\hphantom{-}0.0152} .
\ee

From these estimates, in turn, we can determine the Kelly leverage
for the growth-optimal portfolio
by applying~(\ref{eq:mvkelly}), obtaining
\be
\label{ex:fullkelly}
\vv{k}^* = \hat{\Sigma}^{-1} \hat{\vv{\mu}} \simeq (2.89, 3.78)^T.
\ee
If $\vv{\mu}$ and $\Sigma$ were indeed the same as these estimates,
and if we could borrow money
to provide leverage, this says we could maximize growth over time
of capital by borrowing $5.67$ times our initial capital (assuming a no-interest loan) to
obtain total leverage of $2.89+3.78=6.67$.  We would then invest the original capital along with the borrowed capital
into the two components of the portfolio, rebalancing to hold the fraction of capital in the two components constant.
The resulting expected annual log-return on initial (non-borrowed) capital would
be
\be
\label{eq:fullkellysp}
\vv{k}^* \cdot \hat{\vv{\mu}} - {\vv{k}^*}^T \hat{\Sigma} \vv{k}^* / 2 \simeq 0.172
\ee
In this argument we have pretended that there is no estimation
error.   In fact, that source of noise can have a significant impact, but
that analysis is beyond the scope of this paper.

\subsubsection{Leverage Value Configurations}

We now consider a range of different investment possibilities.
Some of these will be applied to only one of our two instruments, and for these
cases we will use $m=1$ along with the corresponding (marginal)
components of $\hat{\vv{\mu}}$, $\hat{\Sigma}$ and the corresponding
estimate $\hat{S}$ of Sharpe ratio.
In the more interesting cases we will examine portfolios consisting of
both instruments.
We consider the following cases.

\begin{table}[H]
  \begin{center}
    \begin{footnotesize}
    \begin{tabular}{|l|l|l|l|l|r|}
      \hline
      & Leverage    & \multicolumn{3}{l|}{Drift/Diffusion}  & Kelly-Fraction\\
      \hline
      Name & $k$ & $\hat{\mu}$ & $\hat{\Sigma}$ & $\hat{S}$ & $\hat{\alpha}$ \\
      \hline
      \multicolumn{6}{|c|}{Single-instrument ($m=1$) portfolios} \\
      \hline
      Non-Leveraged Equity   & 1.00  &  0.079 & $0.199^2$ & 0.398 & 0.502 \\
      Double-Equity          & 2.00  &  0.079 & $0.199^2$ & 0.398 & 1.003 \\
      Triple-Equity          & 3.00  &  0.079 & $0.199^2$ & 0.398 & 1.505 \\
      Non-Leveraged Bonds    & 1.00   &  0.031 & $0.123^2$ & 0.252 & 0.488   \\
      Double-Bonds           & 2.00  &  0.031 & $0.123^2$ & 0.252 & 0.976 \\
      \hline
      \multicolumn{6}{|c|}{Two-instrument ($m=2$) portfolios} \\
      \hline
      Fractional Kelly(0.30) & (0.87,1.13) &  (0.079,0.031) & $\hat{\Sigma}$~(eqn~\ref{eq:estimates2}) & 0.588 & 0.30 \\
      Constr. Kelly(2.0)     & (1.33,0.67) &  (0.079,0.031) & $\hat{\Sigma}$~(eqn~\ref{eq:estimates2}) & 0.588 & NA \\
      Full-Kelly             & (2.89,3.78) &  (0.079,0.031) & $\hat{\Sigma}$~(eqn~\ref{eq:estimates2}) & 0.588 & 1.0 \\
      \hline
    \end{tabular}
    \end{footnotesize}
  \end{center}
  \centerline{\parbox[t]{5.5in}{
      \caption{Example portfolios. }
  \label{tab:exampleconfigs}
  }}
\end{table}

Cases are constructed as follows.
\begin{itemize}
\item \emph{Non-leveraged/Double/Triple Equities/Bonds:} The non-leveraged approach invests all capital directly into the
  equities/bonds instrument.  Double and triple versions invest all capital in the same instrument but with leverage $2$ or $3$, respectively.
\item \emph{Fractional Kelly (0.30):} This applies leverage $\alpha = 0.30$ times the full-Kelly leverage.
  In this example, it results in total
  leverage $\kappa$ approximately equal to $2.0.$
\item \emph{Constrained Kelly (2.0):} Here we choose $\vv{k}$ so as to maximize expected growth $L$ subject to
  the constraint that $\kappa = 2.$  This can be done using Lagrange multipliers, as described in Appendix~\ref{app:lagrange}.
\item \emph{Full-Kelly:} This choice, given by~(\ref{ex:fullkelly}) leads to maximum growth rate over time, at the cost of high variance and draw-downs.
\end{itemize}

Table~\ref{tab:exampleconfigs} also shows a term
\be
\hat{\alpha} = \vv{k} / (\hat{\Sigma}^{-1} \hat{\vv{\mu}}),
\ee
which is only defined when the leverage vector $\vv{k}$ is a multiple
of $\hat{\Sigma}^{-1}\hat{\vv{\mu}}$, and in that case represents an
estimate of the ratio of portfolio leverage to growth-optimal leverage.
Values less than $1.0$ are desirable in this context, since it is obvious from~(\ref{eq:fracelrt}) and~(\ref{eq:fracvlrt})
that $\alpha>1$ implies we are taking unnecessary extra risk for a return that can also be achieved with $\alpha<1.$

\subsubsection{Simulation Results}

It is straightforward to simulate hypothetical capital invested in the portfolios corresponding to the test cases.
Using data ranging from Feb 13th, 2001 to March 18th, 2021,
we simply compute the daily portfolio returns using the
recorded data.  In order to approximate the tax-effect on the two components,
we also apply a negative drift by subtracting the tax adjustment used on~$\vv{\mu}$ in~(\ref{eq:estimates2})
on a daily basis from the log-returns of the data.

\begin{figure}[h]
  \includegraphics[width=7in]{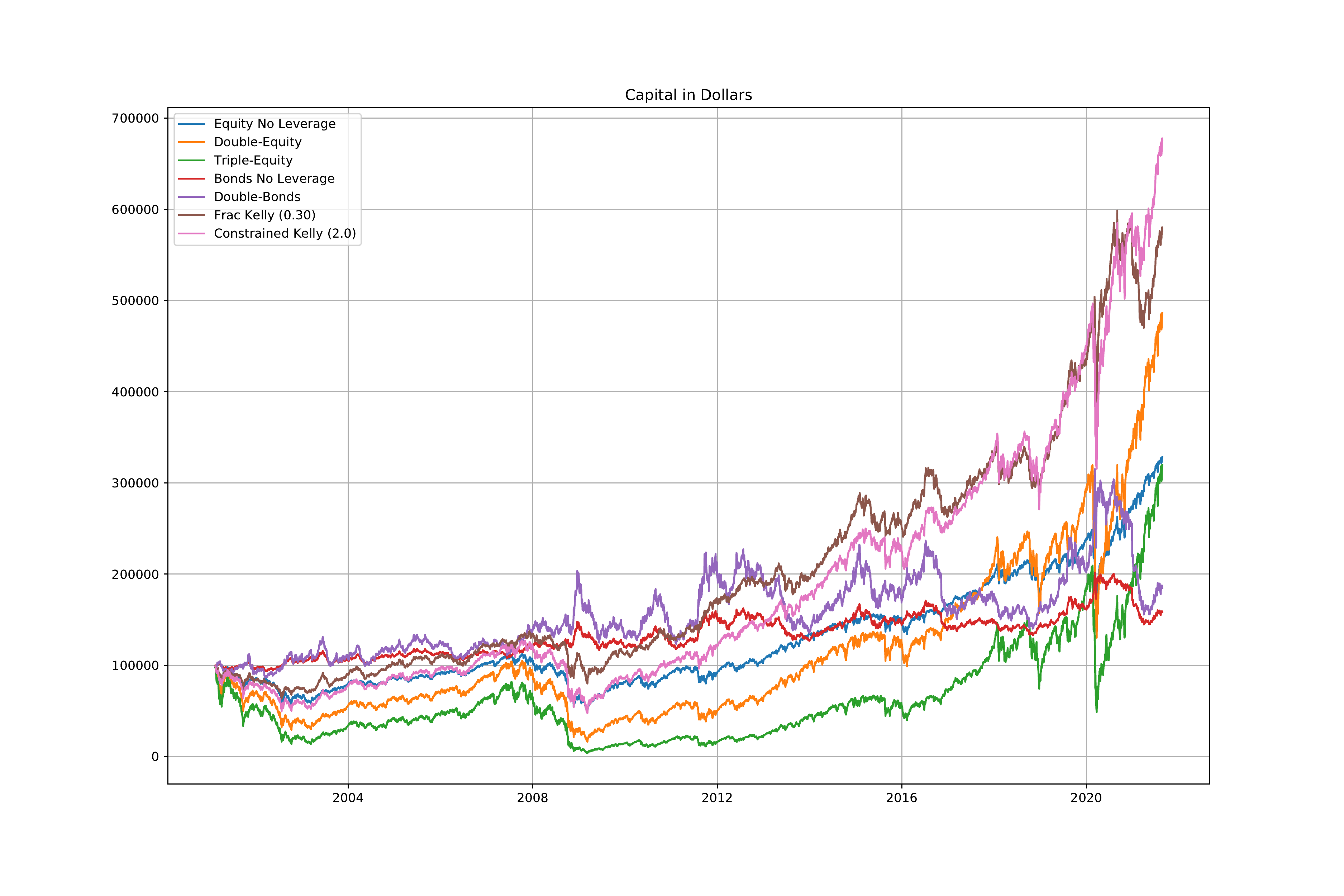}
  \centerline{\parbox[t]{5.5in}{
      \caption{Simulated post-tax capital of hypothetical portfolios built of VFIAX and VUSUX,
        assuming starting capital of \$100000, for a range of different leverage vectors. 
        \label{fig:doublelev}
  }}}
\end{figure}

Figure~\ref{fig:doublelev} shows cumulative capital $\{A_t\}$ over time for a range of different leverage settings.   The full-Kelly
leverage case is omitted from the plot since it distorts the scale, but the full-Kelly case is included in Figure~\ref{fig:doublelog}, which shows the log of cumulative capital.

\begin{figure}[ht]
  \includegraphics[width=7in]{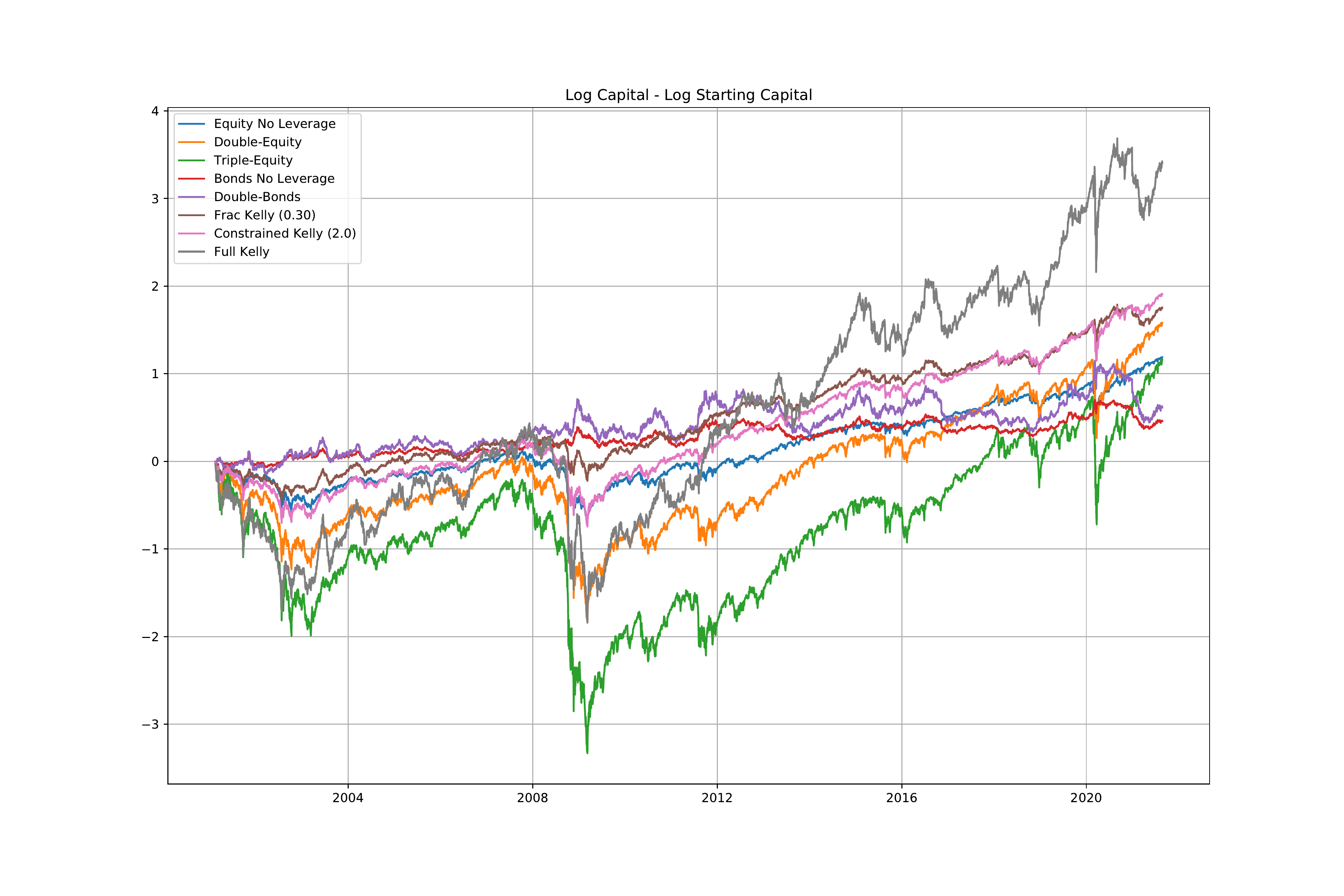}
  \centerline{\parbox[t]{5.5in}{
      \caption{Simulated post-tax log-return of hypothetical portfolios built of VFIAX and VUSUX, for a range of different leverage vectors.
        The full-Kelly case clearly dominates the others in terms of growth, but it also loses 87.7\% of its capital value from Dec. 5th, 2007 to Mar. 6th, 2009 during the sub-prime crisis.
        The simple triple-leverage equity investment also stands out with its low return and 96.3\% loss from  Feb 14, 2001 to Mar 6, 2009.
        \label{fig:doublelog}
  }}}
\end{figure}

\begin{table}[H]
  \begin{center}
    \begin{footnotesize}
    \begin{tabular}{|l|l|l|l|l|l|r|}
      \hline
         & Growth          &                 & \multicolumn{3}{l|}{Draw-down} & \\
      \hline
      Name & $\hat{L}$ & ${\hat{V}}^{1/2}$ & Max. & Start & End  & Final Value \\
      \hline
      \multicolumn{7}{|c|}{Single-instrument ($m=1$) portfolios} \\
      \hline
      Non-Leveraged Equity   &  0.060 & 0.200 & 56.4\% & Oct 8 2007 & Mar 6 2009 & \$328,143 \\
      Double-Equity          &  0.081 & 0.400 & 84.6\% & Oct 8 2007 &  Mar 6 2009 & \$486,418 \\
      Triple-Equity          &  0.060 & 0.603 & 96.4\% & Feb 14 2001 & Mar 6 2009 & \$319,574\\
      Non-Leveraged Bonds    &  0.023 & 0.123 & 28.3\% & Mar 6 2020 & Mar 30 2021 & \$158,109\\
      Double-Bonds           &  0.031 & 0.247 & 50.7\% & Mar 6 2020 & Mar 30 2021 & \$184,841\\
      \hline
      \multicolumn{7}{|c|}{Two-instrument ($m=2$) portfolios} \\
      \hline
      Fractional Kelly(0.30) &  0.089 & 0.176 & 41.9\% & Dec 5 2007 & Mar 6 2009 & \$576,464 \\
      Constr. Kelly(2.0)     &  0.097 & 0.240 & 62.7\% & Oct 8 2007 & Mar 6 2009 & \$675,775 \\
      Full-Kelly             &  0.172 & 0.594 & 89.8\% & Dec 5 2007 &  Mar 6 2009 & \$3,002,829 \\
      \hline
    \end{tabular}
    \end{footnotesize}
  \end{center}
  \centerline{\parbox[t]{5.5in}{
      \caption{Summary of simulated performance of example portfolios derived from the VFIAX and VUSUX instruments, simulated from Feb. 12th, 2001 to Aug. 24th, 2021, with initial capital $\$100,000.$  Risk-free interest rate is assumed to be zero.   $\hat{L}$ denotes the sample average of the annualized log-return, ${\hat{V}}^{1/2}$ denotes the annualized sample standard deviation of those log-returns, and the maximum \emph{draw-down} is defined as the maximum drop from peak to trough with trough occurring after the peak. }
  \label{tab:examplesummary}
  }}
\end{table}

Table~\ref{tab:examplesummary} shows corresponding summary statistics, including the sample
(annualized) growth rate $\hat{L}$, the corresponding sample variance $\hat{V}$,
as well as the final value and  maximum draw-down over the time period, which is defined as the maximum
peak-to-trough drop such that the trough occurs after the peak.  From a psychological perspective,
investors pay significant attention to difference between the
historical maximum and the current value of an investment.  Large values of this ``draw-down'' typically induce great anxiety.
Several important observations can be made from the results.   First, the full-Kelly case and the triple-equity case both exhibit very large maximum draw-downs over this time period - 89.8\% and 96.4\%, respectively.
Few investors would hold onto an investment after losses that large.
Both the fractional Kelly and the constrained Kelly options appear to give a fairly good compromise between greater growth and higher variance/draw-downs.  Of course different investors may different preferences,
but it is notable that of the choices in Table~\ref{tab:examplesummary}, apart from the non-leveraged bond case,
the fractional Kelly option has the smallest value of $\hat{V}$ and still yields a respectable growth rate $\hat{L}.$

\subsection{Application to Fund Evaluation}

Hedge fund managers generally release very little information about the nature of their trading operations.
However, they often release yearly or quarterly statments of actual returns on their hedge funds.
We have already seen that Theorem~\ref{th:slpp} assists in choosing a desirable leverage vector.
Here we see how it leads to a simple method-of-moments approach to estimate portfolio Sharpe ratio and risk deployment of a fund,
using only such publicly available return data.
The approach relies on an implicit assumption that
the fund is applying a fractional Kelly approach, or something similar.  In
light of the previous observation at the end of Subsection~\ref{sec:mv_kelly}
that fractional Kelly leverage is equivalent to industry-standard Markowitz mean-variance optimization,
this is not an unrealistic assumption.
This ``reverse-engineering'' allows us to evaluate funds in
a more nuanced way than simple inspection of past returns.
For example, we can easily differentiate between funds that obtain high returns by
taking excessive risk, and those that obtain high returns by deployment of high-Sharpe ratio
portfolios.

Consider, for example, the annual returns of the well-known (and very strongly performing)
Renaissance Medallion fund.   These returns are publicly available in Appendix 1 of \cite{zuckerman_jimsimons},
and can easily be translated into log-returns and viewed in light of equations~(\ref{eq:fracelrt}) and~(\ref{eq:fracvlrt}).
Without re-printing the full table itself, we give summary statistics of their resulting (before-fee) log-returns
in Table~\ref{tab:renaissance} below.

\begin{table}[H]
  \begin{center}
  \begin{tabular}{|l|l|}
    \hline
    Date Range & 1988-2018 inclusive \\
    \# Data Points & 31 \\
  Average Log-Return & 0.490 \\
  Std. Dev. (Log-Return) & 0.187 \\
  \hline
  \end{tabular}

    \end{center}
  \centerline{\parbox[t]{5.5in}{
      \caption{Summary statistics of Renaissance Medallion Fund returns as given in \cite{zuckerman_jimsimons}.}
  \label{tab:renaissance}}
  }

\end{table}
  
Re-arranging the pair of equations~(\ref{eq:fracelrt},\ref{eq:fracvlrt}), we see that for
a fractional Kelly investor,
\begin{eqnarray}
  \alpha & = & 2V (2L + V)^{-1} \\
  S^2 & = & \alpha^{-1} (L+V/2).
\end{eqnarray}
By simply matching $L = 0.490$ and $V = 0.187^2$ from the table above, we see that the Medallion returns
are consistent with deployment of a portfolio with Sharpe ratio $S \simeq 2.72$ at approximately $\alpha=0.068$ times
the Kelly-optimal leverage.
The (estimated) value of $0.068$ places them comfortably below the obvious danger point.  A value equal to $1.0$ would indicate deployment
of risk at the growth-optimal level, with its extreme volatility.  Such a value is not palatable to a typical investor.
Any value larger than $1.0$ would indicate a
sub-optimal deployment of risk.   A value larger than $2.0$ would indicate that the fund will likely eventually collapse
in the sense that its value will converge in probability to zero.

\section{Concluding Remarks}

Given a portfolio of instruments with risk, some capital $A_0,$ and a risk-free interest rate $r$,
we have seen that the Kelly portfolio maximizes long-term growth of capital, but with distressingly large
draw-downs along the way.  This undesirable quality can be mitigated by the use of
fractional Kelly portfolios, at the cost of a reduction in long-term growth rate.
In Theorem~\ref{th:slpp}, we have directly quantified this distributional relationship,
and given explicit expressions
that help us to find a desirable balance between growth and variance of returns.
We have also seen how, under a set of reasonable assumptions,
fractional Kelly investment is the same as Markowitz mean-variance optimization.

The stochastic differential equation methodology used in this paper has been used by
others in the same context, notably \cite{sid_browne1}, but appears not to be widely appreciated.
Within this framework, one could potentially
generalize the geometric Brownian motion price models~(\ref{eq:sde1}, \ref{eq:mvsde0}) to
jump-diffusion models, thereby allowing for the more realistic case of skewed and
heavy-tailed log-returns.   Using multivariate stochastic calculus,
it would also possible to carry out analysis of situations where
drift and/or diffusion coefficients $\vv{\mu}$ and $\Sigma$ are not known
and must be estimated.

On the practical side, the relationships given
in Theorem~\ref{th:slpp} have broad applicability.
They apply equally well to
simple investment of an individual's personal funds
as to the more sophisticated operations of a quantitative trading organization or other fund.
For example, individual investors can use the implied return distributions to assess the quality of
leveraged mutual funds under various assumptions.
At the more sophisticated end of the spectrum, a quantitative trading operation with a porfolio of known Sharpe ratio could
use the result to determine how much capital/leverage to deploy,
while remaining within distributional constraints imposed by their investors.
More generally, the stochastic differential equation framework
%
%
provides a solid foundation with which to address important
yet-unsolved practical investment problems.

\section{Acknowledgements}

The author is grateful to Peter Brockwell, Jonathan Baxter, and Peter Dodds for their
valuable comments and suggestions.
This work has benefited substantially from their feedback.

\appendix

\section{Supporting Results}

\subsection{Constrained Maximization of Expected Log-Return per Unit Time}
\label{app:lagrange}

Suppose we wish to choose the leverage vector
$k \in \Rr^m$ to maximize expected log-return per unit time
(restating equation~(\ref{eq:mvelrt}))
\be
\label{eq:app1}
L(\vv{k}) = \vv{k} \cdot \vv{\mu} - \frac{1}{2} \vv{k}^T \Sigma \vv{k},
\ee
subject to the constraint that total-leverage is
\be
\label{eq:app2}
\kappa = \sum_{j=1}^m k_j = \vv{k} \cdot \ones_m = \kappa_0,
\ee
where $\ones_m = (1,\ldots,1)^T \in \Rr^m.$
We can express the constraint as
\be
\label{eq:app4}
(\vv{k} \cdot \ones_m - \kappa_0) = 0.
\ee
Thus the Lagrangian is
\be
{\cal L}(\vv{k}) = \vv{k} \cdot \vv{\mu} - \frac{1}{2} \vv{k}^T \Sigma \vv{k} - \lambda (\vv{k} \cdot \ones_m - \kappa_0),
\ee
Differentiating with respect to the vector $\vv{k}$, and setting to zero, we find that the
value maximizing~(\ref{eq:app1}) takes the form
\be
\label{eq:app3}
\vv{k} = \Sigma^{-1} (\vv{\mu} - \lambda \ones_m).
\ee
Substituting~(\ref{eq:app3}) into the constraint~(\ref{eq:app4}),
we can solve for $\lambda$, obtaining
\be
\label{eq:app5}
\lambda = (\ones_m^T \Sigma^{-1} \vv{\mu} - \kappa_0)(\ones_m^T \Sigma^{-1} \ones_m)^{-1}.
\ee
Equations~(\ref{eq:app3}) and~(\ref{eq:app5}) together specify the leverage
vector maximizing $L$ subject to the required constraint.

\bibliography{fractional_bib}

\begin{thebibliography}{11}
\providecommand{\natexlab}[1]{#1}
\providecommand{\url}[1]{\texttt{#1}}
\expandafter\ifx\csname urlstyle\endcsname\relax
  \providecommand{\doi}[1]{doi: #1}\else
  \providecommand{\doi}{doi: \begingroup \urlstyle{rm}\Url}\fi

\bibitem[Breiman(1961)]{breiman_gambling}
Leo Breiman.
\newblock Optimal gambling systems for favorable games.
\newblock \emph{Fourth Berkeley Symposium on Probability and Statistics},
  1:\penalty0 65--78, 1961.

\bibitem[Browne(1997)]{sid_browne1}
Sid Browne.
\newblock Survival and growth with a liability: Optimal portfolio strategies in
  continuous time.
\newblock \emph{Mathematics of Operations Research}, 22\penalty0 (2):\penalty0
  468--498, 1997.

\bibitem[Hakansson(1971)]{hakansson_capitalgrowth}
Nils Hakansson.
\newblock Capital growth and the mean-variance approach to portfolio selection.
\newblock \emph{Journal of Financial and Quantitative Analysis}, 26:\penalty0
  857--884, 1971.

\bibitem[Karatzas and Shreve(1991)]{Karatzas}
I.~Karatzas and S.~Shreve.
\newblock \emph{Brownian Motion and Stochastic Calculus}.
\newblock Springer, 1991.

\bibitem[Kelly(1956)]{kelly}
J.~L. Kelly.
\newblock A new interpretation of information rate.
\newblock \emph{Bell System Technical Journal}, 35\penalty0 (4):\penalty0
  917--926, 1956.

\bibitem[MacLean et~al.(2004)MacLean, Sanegre, Zhao, and
  Ziemba]{maclean_growthwithsecurity}
L.C. MacLean, R.~Sanegre, Y.~Zhao, and W.T. Ziemba.
\newblock Capital growth with security.
\newblock \emph{Journal of Economic Dynamics and Control}, 28:\penalty0
  937--954, 2004.

\bibitem[MacLean et~al.(2012)MacLean, Thorp, and Ziemba]{kelly_collection}
L.C. MacLean, E.O. Thorp, and W.T. Ziemba, editors.
\newblock \emph{The Kelly Capital Growth Investment Criterion}, volume~3 of
  \emph{World Scientific Handbook in Financial Economic Series}.
\newblock World Scientific, 2012.

\bibitem[Markowitz(1952)]{markowitz_meanvariance}
H.~Markowitz.
\newblock Portfolio selection.
\newblock \emph{Journal of Finance}, 7\penalty0 (1):\penalty0 77--91, 1952.

\bibitem[{\O}ksendal(2003)]{Oksendal}
B.~{\O}ksendal.
\newblock \emph{Stochastic Differential Equations}.
\newblock Springer, 6th edition, 2003.

\bibitem[Thorpe(1975)]{thorpe_kelly}
E.O. Thorpe.
\newblock Portfolio choice and the {K}elly criterion.
\newblock In W.T. Ziemba and T.G. Vickson, editors, \emph{Stochastic
  Optimization Models in Finance}, pages 599--619. Academic Press, 1975.

\bibitem[Zuckerman(2019)]{zuckerman_jimsimons}
Gregory Zuckerman.
\newblock \emph{The Man Who Solved the Market: How {J}im {S}imons Launched the
  Quant Revolution}.
\newblock Penguin Random House LLC, 2019.

\end{thebibliography}

\end{document}